\definecolor{colorref}{rgb}{0.4648,0,0} 
\definecolor{colorcite}{rgb}{0,0.2902,0.1765}
\newtheorem{proposition}{Proposition}
\newtheorem{lemma}{Lemma}
\newcommand{\Rmnum}[1]{\uppercase\expandafter{\romannumeral #1}}
\newcommand{\signal}[1]{{\boldsymbol{#1}}}
\newcommand{\real}{{\mathbb R}}
\newtheorem{definition}{Definition}
\newtheorem{fact}{Fact}
\newtheorem{problem}{Problem}
\newcommand{\Natural}{{\mathbb N}}
\newcommand{\refeq}[1]{(\ref{#1})}
\newif\iftodo   
\newif\iftodoshort  
\title{Performance Limits of Solutions to \\ Network Utility Maximization Problems}
\author{R.~L.~G.~Cavalcante and S.~Sta\'nczak\\ Fraunhofer Heinrich Hertz Institute and Technical University of Berlin \\ 
	email: \{renato.cavalcante,slawomir.stanczak\}@hhi.fraunhofer.de}
\newenvironment{proof}{{\it Proof:}}{\hfill$\square$\\}
\begin{document}
\maketitle

\begin{abstract}
We study performance limits of solutions to utility maximization problems (e.g., max-min problems) in wireless networks as a function of the power budget $\bar{p}$ available to transmitters. Special focus is devoted to the utility and the transmit energy efficiency (i.e., utility over transmit power) of the solution. Briefly, we show tight bounds for the general class of network utility optimization problems that can be solved by computing conditional eigenvalues of standard interference mappings.  The proposed bounds, which are based on the concept of asymptotic functions, are simple to compute, provide us with  good estimates of the performance of networks for any value of $\bar{p}$ in many real-world applications, and enable us to determine points in which networks move from a noise limited regime to an interference limited regime.  Furthermore, they also show that the utility and the transmit energy efficiency scales as $\Theta(1)$ and $\Theta(1/\bar{p})$, respectively, as $\bar{p}\to\infty$.
\end{abstract}

\section{Introduction}
Recent studies \cite{nuzman07,renato2016maxmin,sun2014,renatomaxmin} have shown a strong connection between the solution to many utility maximization problems in wireless networks and conditional eigenvalues of nonlinear mappings \cite{krause01} (typically, standard interference mappings \cite{yates95}).  In general, the utility of the solution increases as we increase the power budget $\bar{p}$ available to transmitters \cite{nuzman07}, so we can unlock fundamental bounds on the network performance by studying the asymptotic behavior of the solutions as $\bar{p}\to\infty$, as we have recently shown in the yet unpublished study in \cite{renatomaxmin}. However, to date, bounds of this type are rare in the literature, and the existing bounds on the utility are not asymptotically sharp for many well-known utility maximization problems.

Against this background, in this study we derive upper bounds for the utility and for the transmit energy efficiency (i.e., utility over power) achieved by solutions to utility maximization problems for a given power budget $\bar{p}$. Unlike the bounds in previous studies \cite{renatomaxmin}, the bounds derived here are asymptotically tight, and they are valid for a larger class of utility maximization problems. These bounds, which do not depend on any unknown constants, are particularly useful to determine power regions in which wireless networks are expected to be noise limited and interference limited. In addition, they reveal that the network utility and the energy efficiency scale as $\Theta(1)$ and $\Theta(1/\bar{p})$, respectively, as $\bar{p}\to\infty$ (in this study, big theta $\Theta$ is defined as in the standard family of Bachmann-Landau notations). The main tools for the analysis shown here are the results in \cite{nuzman07} and the concept of asymptotic functions~\cite{aus03}, which so far have received limited attention from the wireless community \cite{renato2016,renato2016maxmin,renatomaxmin}. To the best of our knowledge, we show for the first time properties of asymptotic functions associated with standard interference functions that are not necessarily convex or concave. We illustrate the theoretical findings by studying the utility obtained in a dense wireless network in a stadium, one of the use-cases considered for the fifth generation of wireless networks \cite{renato2016maxmin}.

\section{Mathematical Preliminaries}

 One of the main objectives of this section is to derive properties of asymptotic functions associated with standard interference functions. This section also clarifies much of the notation, and it reviews standard results and definitions that are required for the contributions in the next sections.

In more detail, by $\real_+$ and $\real_{++}$ we denote the set of non-negative reals and positive reals, respectively. The effective domain of a function $f:\real^N\to \real\cup\{\infty\}$ is given by $\mathrm{dom}f:=\{\signal{x}\in\real^N~|~f(\signal{x})<\infty\}$, and $f$ is proper if $\mathrm{dom}~f \neq\emptyset$. We say that $f$ is continuous when restricted to $C\subset \real^N$  if $(\forall\signal{x}\in C) (\forall(\signal{x}_n)_{n\in\Natural}\subset C) \lim_{n\to\infty} \signal{x}_n=\signal{x}\Rightarrow \lim_{n\to\infty} f(\signal{x}_n)=f(\signal{x})$, and we write $\lim_{n\to\infty}\signal{x}_n=\signal{x}$ if and only if $\lim_{n\to\infty}\|\signal{x}_n-\signal{x}\|=0$ for an arbitrary norm $\|\cdot\|$ (the choice of the norm is arbitrary because of the equivalence of norms in finite dimensional spaces). The notions of upper and lower semicontinuity for functions $f:\real^N\to \real \cup \{\infty\}$ restricted to sets $C\subset\real^N$ are defined similarly. Given $(\signal{x},\signal{y})\in\real^N\times\real^N$, vector inequalities such as $\signal{x}\le\signal{y}$ should be understood coordinate-wise. If $C\subset \real^N$ is a convex set, we say that a mapping $T:C\to\real^N:\signal{x}\mapsto[t_1(\signal{x}), \cdots, t_N(\signal{x})]$ is concave if, for every $i\in\{1,\ldots,N\}$, the function $t_i:C\to\real$ is concave. A norm $\|\cdot\|$ in $\real^N$ is said to be monotone if $(\forall\signal{x}\in\real^N)(\forall\signal{y}\in\real^N)$ $\signal{0}\le\signal{x}\le\signal{y}\Rightarrow\|\signal{x}\| \le \|\signal{y}\|$.

A fundamental mathematical tool used in this study is the analytic representation of asymptotic functions, which we state as a definition:

\begin{definition}
	\label{def.asymp_func} (\cite[Theorem~2.5.1]{aus03} Asymptotic function)
	The asymptotic function associated with a proper function $f:\real^N\to\real\cup \{\infty\}$ is the function given by
	\begin{align}
	\label{eq.asymp_func}
	\begin{array}{rcl}
	f_{\infty}:\real^N & \to & \real \cup \{\infty\} \\ 
	\signal{x} & \mapsto & \inf\left\{\displaystyle \liminf_{n\to\infty} \dfrac{f(h_n\signal{x}_n)}{h_n}~|~h_n\to\infty,~\signal{x}_n\to\signal{x} \right\},
	\end{array}
	\end{align} 
	where $(\signal{x}_n)_{n\in\Natural}$ and $(h_n)_{n\in\Natural}$ are sequences in $\real^N$ and $\real$, respectively. Equivalently, $f_{\infty}:\real^N  \to  \real \cup \{\infty\}:\signal{x}\mapsto \liminf_{h\to\infty,\signal{y}\to\signal{x}}{f(h\signal{y})}/{h}$.
\end{definition}

In particular, here we are mostly interested in computing asymptotic functions associated with standard interference functions,  defined as follows:

\begin{definition}
	\label{definition.inter_func}
	(\cite{yates95} Standard interference functions)  A function $f: \real^N \to \real_{++} \cup \{\infty\}$ is said to be a standard interference function if the following properties hold:\par 
	\begin{enumerate}
		\item ({\it Scalability}) $(\forall \signal{x}\in\real^N_+)$ $(\forall \alpha>1)$  $\alpha {f}(\signal{x})>f(\alpha\signal{x})$. \par
		\item ({\it Monotonicity}) $(\forall \signal{x}_1\in\real_+^N)$ $(\forall \signal{x}_2\in\real_+^N)$ $\signal{x}_1\ge\signal{x}_2 \Rightarrow{f}(\signal{x}_1)\ge f(\signal{x}_2)$. \par
		\item $f(\signal{x})=\infty$ if and only if $\signal{x}\notin\real_{+}^N$. 
	\end{enumerate}
	Given $N$ standard interference functions $t_i:\real^N \to \real_{++} \cup \{\infty\}$, $i=1,\ldots,N$, we call the mapping $T:\real^N_+\to\real_{++}^N:\signal{x}\mapsto[t_1(\signal{x}),\ldots, t_N(\signal{x})]$ a {\it standard interference mapping}.
\end{definition}

As shown in \cite[Proposition 1]{renato2016} and the references therein, functions that are positive and concave when restricted to the domain $\real_+^N$ are standard interference functions, but the converse does not hold in general. Therefore, as it will soon become clear, the bounds derived in the next section are more general than those in the recent unpublished work in \cite{renatomaxmin}. One of the key results that we further explore in the next section is the following:

\begin{fact}
	\label{fact.nuzman} 
	(\cite{nuzman07,krause01}) Let $\|\cdot\|$ be a monotone norm and $T:\real_{+}^N\to\real_{+}^N$ be either a standard interference mapping or a concave mapping that is positively homogeneous and, for every $\signal{x}\in\real_{+}^N\backslash\{\signal{0}\}$, there exists $m\in\Natural$ for which  $T^n(\signal{x})>\signal{0}$ for all $n\ge m$ (these mappings are said to be \emph{primitive}, and $T^n$ denotes the $n$-fold composition of $T$ with itself). Then each of the following holds:
	\begin{enumerate}
		\item[(i)] There exists a unique solution $(\signal{x}^\star, \lambda^\star)\in\real_{++}^N\times\real_{++}$ to the conditional eigenvalue problem
		\begin{problem}
			\label{problem.cond_eig}
			Find $(\signal{x}, \lambda)\in\real_{+}^N\times\real_{+}$ such that $T(\signal{x})=\lambda\signal{x}$ and $\|\signal{x}\|=1$.
		\end{problem}
		\item[(ii)] The sequence $(\signal{x}_n)_{n\in\Natural}\subset\real_{+}^N$ generated by 
		\begin{align}
		\label{eq.krause_iter}
		\signal{x}_{n+1} = T^\prime({\signal{x}_n}):=\dfrac{1}{\|T(\signal{x}_n)\|}T(\signal{x}_n),\quad\signal{x}_1\in\real_{++}^N,
		\end{align}
		converges geometrically to the uniquely existing vector $\signal{x}^\star\in\mathrm{Fix}(T^\prime):=\{\signal{x}\in\real_{+}^N~|~\signal{x}=T^\prime(\signal{x})\}$, which is also the vector $\signal{x}^\star$ of the tuple $(\signal{x}^\star,\lambda^\star)$ that solves Problem~\ref{problem.cond_eig}. Furthermore, the sequence $(\lambda_n:=\|T(\signal{x}_n)\|)_{n\in\Natural}\subset\real_{++}$ converges to $\lambda^\star$.
	\end{enumerate}
\end{fact}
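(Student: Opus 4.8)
The plan is to recast Problem~\ref{problem.cond_eig} as a fixed point problem for the normalized map $T^\prime$ in \eqref{eq.krause_iter}, and then to obtain existence, uniqueness, and geometric convergence through a contraction argument in a projective (ray) metric on the interior $\real_{++}^N$ of the cone. First I would observe that, since $\|\cdot\|$ is monotone and $T$ takes values in $\real_{++}^N$, we have $\|T(\signal{x})\|>0$ for every $\signal{x}\in\real_+^N\setminus\{\signal{0}\}$, so $T^\prime$ is well defined there. A tuple $(\signal{x}^\star,\lambda^\star)$ solves Problem~\ref{problem.cond_eig} if and only if $\signal{x}^\star=T^\prime(\signal{x}^\star)$ with $\|\signal{x}^\star\|=1$ and $\lambda^\star=\|T(\signal{x}^\star)\|$; indeed, $T(\signal{x}^\star)=\lambda^\star\signal{x}^\star$ forces $\lambda^\star=\|T(\signal{x}^\star)\|$ after taking norms and using $\|\signal{x}^\star\|=1$. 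Hence it suffices to study the self-map $T^\prime$ on the compact slice $S:=\{\signal{x}\in\real_+^N~|~\|\signal{x}\|=1\}$.

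The core step is to show that $T$ is nonexpansive, and $T^\prime$ a strict contraction, with respect to Hilbert's projective metric $d(\signal{x},\signal{y})=\log\bigl(\max_i (x_i/y_i)\bigr)-\log\bigl(\min_i (x_i/y_i)\bigr)$ on $\real_{++}^N$, which descends to a genuine metric on the set of rays and therefore on $S\cap\real_{++}^N$. For both cases I would exploit monotonicity together with the respective homogeneity property. In the concave case, concavity and positive homogeneity make $T$ order-preserving and subhomogeneous, so it is $d$-nonexpansive by a Birkhoff--Hopf type estimate, while primitivity guarantees that some iterate $T^m$ is strictly positive and hence a strict $d$-contraction. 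In the standard-interference case, monotonicity already makes $T$ $d$-nonexpansive, and the scalability property $\alpha f(\signal{x})>f(\alpha\signal{x})$ for $\alpha>1$ upgrades the normalized map to a strict contraction. Since $d$ is invariant under positive scaling, $T^\prime$ inherits the same contraction constant.

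Having a strict contraction on the (complete) ray space of $\real_{++}^N$ equipped with $d$, the Banach fixed point theorem would deliver a unique fixed point $\signal{x}^\star$ together with geometric convergence of $(\signal{x}_n)$ in $d$; transferring this to $\|\cdot\|$ via the local Lipschitz equivalence of $d$ and the norm on compact subsets of $\real_{++}^N$ then yields the geometric convergence of $(\signal{x}_n)$ to $\signal{x}^\star$ claimed in (ii). Existence on all of $S$ (not only its interior) and uniqueness in $\real_{++}^N$ follow by combining primitivity, which pushes any starting point into the interior after finitely many iterations, with the contraction. Finally, $\lambda_n=\|T(\signal{x}_n)\|\to\|T(\signal{x}^\star)\|=\lambda^\star$ by continuity of $T$ and of the norm.

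The main obstacle I anticipate is the contraction estimate itself, especially for standard interference mappings that are neither convex nor concave: here neither the classical linear Perron--Frobenius argument nor the concave Birkhoff--Hopf argument applies directly, and one must extract the $d$-contraction solely from monotonicity and scalability. A secondary difficulty is controlling the degeneracy of the projective metric on the boundary of $\real_+^N$, which is exactly where the primitivity assumption is needed to confine the orbit to the interior so that the contraction and the metric equivalence can be invoked.
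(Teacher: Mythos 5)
First, a point of reference: the paper offers no proof of this statement. It is imported as a Fact from \cite{nuzman07,krause01}, so your proposal can only be measured against those references. Your overall strategy---recast the conditional eigenvalue problem as a fixed point problem for $T^\prime$ and extract existence, uniqueness, and geometric convergence from a contraction in a cone metric---is essentially Krause's argument for the concave, positively homogeneous, primitive case, where a Birkhoff--Hopf estimate is available (though even there primitivity only makes some iterate $T^m$ a strict contraction with finite projective diameter of its image, so the Banach argument must be run on the $m$-fold composition rather than on $T^\prime$ itself).

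The genuine gap is in the standard-interference case, which is the case this paper actually needs, since its mappings are neither concave nor homogeneous. Your step ``monotonicity already makes $T$ $d$-nonexpansive'' for Hilbert's projective metric is false: nonexpansiveness in Hilbert's metric requires positive homogeneity of degree one, which standard interference mappings do not have---the scalability axiom makes them \emph{strictly} subhomogeneous, hence non-homogeneous. What monotonicity plus subhomogeneity yields is nonexpansiveness in Thompson's part metric, and that metric does not identify rays, so the projective scaffolding you build (descending to the slice, completeness of the ray space, transferring the contraction constant to $T^\prime$ ``since $d$ is invariant under positive scaling'') collapses: because $T$ does not preserve rays, $T^\prime$ restricted to the unit sphere of $\|\cdot\|$ is not the projectivization of $T$, and the inequality $d(T(\signal{x}),T(\signal{y}))\le d(\signal{x},\signal{y})$ is precisely what fails. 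The contraction estimate you defer as ``the main obstacle'' is therefore not a technical loose end but the entire content of the theorem for this case, and the metric you chose is the wrong one in which to seek it. The cited proof in \cite{nuzman07} proceeds differently: after the logarithmic change of variables $\signal{u}=\log\signal{p}$, monotonicity and scalability make the transformed map nonexpansive in the supremum norm (equivalently, one works in the part metric rather than the projective metric), and it is the strict inequality in the scalability axiom combined with the normalization by the monotone norm that manufactures the eventual strict contraction and hence the geometric convergence in (ii).
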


We now proceed to the study of asymptotic functions associated with standard interference functions. We start with the following simple result.

\begin{lemma}
	\label{lemma.basic_properties}
	Let $f:\real^N \to \real_{++} \cup \{\infty\}$ be a standard interference function. Then we have:
	\begin{itemize}
		
		\item[(i)] $(\forall \signal{x}\in\real_{+}^N)(\forall \alpha\in~]0,1[)\quad f(\alpha\signal{x}) >\alpha f(\signal{x})$
		
		\item[(ii)] $(\forall \signal{x}\in\real_{+}^N)(\forall \alpha_1\in\real_{++})(\forall \alpha_2\in\real_{++}) \\
		\alpha_2 > \alpha_1 \Rightarrow \dfrac{1}{\alpha_1} f(\alpha_1\signal{x})	> \dfrac{1}{\alpha_2} f(\alpha_2\signal{x})>0$
		
		\item[(iii)] $(\forall \signal{x}\in\real_{+}^N)~ \lim_{h\to\infty} f(h\signal{x})/h\in\real_+$ (i.e., the limit always exists).
	\end{itemize}
\end{lemma}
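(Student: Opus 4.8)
The plan is to derive all three statements from the scalability axiom alone, observing that the monotonicity axiom plays no role here and that every strict-positivity claim is inherited directly from the codomain: on $\real_+^N$ a standard interference function takes values in $\real_{++}$. The single mechanism I would use throughout is a rescaling substitution: to compare $f$ at two positive multiples of $\signal{x}$, write the larger multiple as a factor strictly greater than $1$ times the smaller one, and apply scalability at the smaller point.

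For (i), fix $\signal{x}\in\real_+^N$ and $\alpha\in\,]0,1[$, and set $\beta:=1/\alpha>1$. Since $\alpha\signal{x}\in\real_+^N$ and $\signal{x}=\beta(\alpha\signal{x})$, scalability applied to the point $\alpha\signal{x}$ with factor $\beta$ gives $\beta f(\alpha\signal{x})>f(\beta\alpha\signal{x})=f(\signal{x})$, i.e. $f(\alpha\signal{x})>\alpha f(\signal{x})$, which is the claim. For (ii), fix $\signal{x}\in\real_+^N$ and $0<\alpha_1<\alpha_2$, and set $\gamma:=\alpha_2/\alpha_1>1$. Then $\alpha_1\signal{x}\in\real_+^N$ and $\alpha_2\signal{x}=\gamma(\alpha_1\signal{x})$, so scalability yields $\gamma f(\alpha_1\signal{x})>f(\alpha_2\signal{x})$; dividing by $\alpha_2>0$ gives $\tfrac{1}{\alpha_1}f(\alpha_1\signal{x})>\tfrac{1}{\alpha_2}f(\alpha_2\signal{x})$. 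The remaining inequality $\tfrac{1}{\alpha_2}f(\alpha_2\signal{x})>0$ is immediate, because $\alpha_2\signal{x}\in\real_+^N$ forces $f(\alpha_2\signal{x})\in\real_{++}$.

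For (iii), the key observation is that (ii) states precisely that the single-variable function $g:\real_{++}\to\real:h\mapsto f(h\signal{x})/h$ is strictly decreasing, and that, again by positivity of $f$ on $\real_+^N$, it is bounded below by $0$. A monotone decreasing function bounded below converges as $h\to\infty$ to its infimum, which is nonnegative and, being dominated by the finite value $g(h_0)$ at any fixed $h_0>0$, is also finite; hence the limit exists and lies in $\real_+$. This settles the case $\signal{x}\neq\signal{0}$. The degenerate case $\signal{x}=\signal{0}$ is handled separately: then $f(h\signal{0})=f(\signal{0})$ is a fixed finite value (finite since $\signal{0}\in\real_+^N$), so $f(h\signal{0})/h\to 0\in\real_+$.

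I do not expect a genuine obstacle here; all three parts reduce to one substitution into the scalability inequality. The only points deserving a moment of care are the edge case $\signal{x}=\signal{0}$ and confirming that the limit in (iii) is finite rather than merely existing in $[0,\infty]$ — both resolved by the strict positivity and finiteness of $f$ on $\real_+^N$ together with the monotone decrease of $g$ established in (ii).
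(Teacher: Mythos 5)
Your proposal is correct and follows essentially the same route as the paper: scalability applied at the rescaled point gives (i) and (ii), and (iii) follows because (ii) makes $h\mapsto f(h\signal{x})/h$ decreasing and bounded below by zero. Your separate treatment of $\signal{x}=\signal{0}$ is harmless but unnecessary, since the monotone-decreasing argument already covers that case.
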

\begin{proof}
	(i) The desired result follows from a simple implication of Definition~\ref{definition.inter_func}.1:
	\begin{align*}
	(\forall \signal{x}\in\real_{+}^N)(\forall \alpha\in~]0,1[) ~~ \dfrac{1}{\alpha} f(\alpha \signal{x}) > f\left(\dfrac{\alpha}{\alpha}\signal{x} \right)=f(\signal{x}).
	\end{align*}
	
	(ii) Let $\alpha_2 > \alpha_1 > 0$ and define $\beta:=\alpha_2/\alpha_1>1$. The proof is now a direct consequence of Definition~\ref{definition.inter_func}.1:
	\begin{align*}
	\dfrac{1}{ \alpha_1}f(\alpha_1\signal{x}) = \dfrac{\beta}{\beta \alpha_1}f(\alpha_1\signal{x}) > \dfrac{1}{\beta \alpha_1}f(\beta \alpha_1\signal{x})=\dfrac{1}{ \alpha_2}f(\alpha_2\signal{x})>0.
	\end{align*}
	
	(iii) By (ii), for every $\signal{x}\in\real_{+}^N$, the function $\real_{++}\to\real_{++}:h\mapsto f(h\signal{x})/{h}$ is monotonically decreasing and bounded below by zero, so the limit $\lim_{h\to\infty} f(h\signal{x})/h\in\real_+$ exists as claimed.	
\end{proof}

The next proposition shows important properties of asymptotic functions associated with standard interference functions. In particular, it establishes the connection between the limit in Lemma~\ref{lemma.basic_properties}(iii) and asymptotic functions. This connection is of practical significance because the limit in Lemma~\ref{lemma.basic_properties}(iii) is often easy to compute in the applications we consider here. In fact, Proposition~\ref{prop.af_properties}(i) shows that a useful simplification for the computation of $f_\infty$ when $f$ is convex \cite[Corollary~2.5.3]{aus03} is also available when $f$ is a standard interference function that is not necessarily convex or concave.

\begin{proposition}
	\label{prop.af_properties}
	 The asymptotic function $f_{\infty}:\real^N  \to  \real \cup \{\infty\}$ associated with a standard interference function $f:\real^N \to \real_{++} \cup \{\infty\}$ has the following properties:
	\begin{itemize}
		\item[(i)] $(\forall\signal{x}\in\real^N_+)~f_\infty(\signal{x}) = \lim_{h\to\infty} f(h \signal{x})/{h} \in\real_+$.
		
		\item[(ii)] $f_\infty$ is lower semicontinuous and positively homogeneous. If $f$ is in addition continuous when restricted to the non-negative orthant $\real_{+}^N$, then $f_\infty$ is continuous when restricted to the non-negative orthant $\real_{+}^N$.
		\item[(iii)] ({\it Monotonicity}) $(\forall \signal{x}_1\in\real_+^N)$ $(\forall \signal{x}_2\in\real_+^N)$ $\signal{x}_1\ge\signal{x}_2  \Rightarrow  {f}_\infty(\signal{x}_2)\le f_\infty(\signal{x}_1)$.
		\item[(iv)] Let  $\signal{x}\in\real^N_{+}$ be arbitrary. If $f$ is continuous when restricted to $\real_{+}^N$, then
		$f_\infty(\signal{x})=\lim_{n\to\infty}{f(h_n \signal{x}_n)}/{h_n}$ for all sequences $(\signal{x}_n)_{n\in\Natural}\subset\real_{+}^N$ and $(h_n)_{n\in\Natural}\subset\real_{++}$ such that $\lim_{n\to\infty}\signal{x}_n=\signal{x}$ and $\lim_{n\to\infty}h_n=\infty$.
		\item[(v)] If $f$ is also concave when restricted to $\real_{+}^N$, then $f_\infty$ is concave when restricted to $\real_{+}^N$. 
		
	\end{itemize}
	
\end{proposition}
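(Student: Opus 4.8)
The plan is to reduce every claim to the limit $L(\signal{x}):=\lim_{h\to\infty} f(h\signal{x})/h$, which exists in $\real_+$ for $\signal{x}\in\real_+^N$ by Lemma~\ref{lemma.basic_properties}(iii). For (i) I would prove $f_\infty(\signal{x})=L(\signal{x})$ by two inequalities. The bound $f_\infty(\signal{x})\le L(\signal{x})$ is immediate from Definition~\ref{def.asymp_func}, evaluating the $\liminf$ along the constant sequence $\signal{x}_n=\signal{x}$ (with any $h_n\to\infty$), for which it equals $L(\signal{x})$. For the reverse bound, take arbitrary sequences $h_n\to\infty$ and $\signal{x}_n\to\signal{x}$; terms with $\signal{x}_n\notin\real_+^N$ contribute $f(h_n\signal{x}_n)=\infty$ by Definition~\ref{definition.inter_func}.3 and cannot lower the $\liminf$, so I may assume $\signal{x}_n\in\real_+^N$. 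Fixing $\delta\in~]0,1[$, nonnegativity of $\signal{x}_n$ together with $\signal{x}_n\to\signal{x}$ gives $\signal{x}_n\ge(1-\delta)\signal{x}$ for all large $n$ (the comparison is automatic on the coordinates where $\signal{x}$ vanishes); monotonicity (Definition~\ref{definition.inter_func}.2) then yields $f(h_n\signal{x}_n)\ge f((1-\delta)h_n\signal{x})$, and dividing by $h_n$ and letting $n\to\infty$ produces $\liminf_n f(h_n\signal{x}_n)/h_n\ge(1-\delta)L(\signal{x})$. Letting $\delta\downarrow 0$ closes the gap, so $f_\infty(\signal{x})\ge L(\signal{x})$.

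For (ii), positive homogeneity and lower semicontinuity of $f_\infty$ on all of $\real^N$ hold for every proper function, and I would simply invoke \cite[Theorem~2.5.1]{aus03}. The continuity of $f_\infty$ restricted to $\real_+^N$ under the extra hypothesis that $f$ is continuous on $\real_+^N$ is the only nontrivial point here. The idea is to write, for $\signal{x}\in\real_+^N$, $f_\infty(\signal{x})=L(\signal{x})=\inf_{h>0} f(h\signal{x})/h$, where the infimum equals the limit because $h\mapsto f(h\signal{x})/h$ is monotonically decreasing by Lemma~\ref{lemma.basic_properties}(ii). Each map $\signal{x}\mapsto f(h\signal{x})/h$ is continuous, hence upper semicontinuous, on $\real_+^N$, and a pointwise infimum of upper semicontinuous functions is upper semicontinuous; combined with the lower semicontinuity already established, this gives continuity on $\real_+^N$. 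Part (iii) is then a one-line consequence of (i): for $\signal{x}_1\ge\signal{x}_2$ in $\real_+^N$ we have $h\signal{x}_1\ge h\signal{x}_2$ for every $h>0$, so monotonicity of $f$ gives $f(h\signal{x}_1)\ge f(h\signal{x}_2)$, and dividing by $h$ and passing to the limit yields $f_\infty(\signal{x}_1)\ge f_\infty(\signal{x}_2)$.

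For (iv) I would combine a lower and an upper estimate. The lower estimate $\liminf_n f(h_n\signal{x}_n)/h_n\ge f_\infty(\signal{x})$ is exactly the reverse bound already proved in (i). For the upper estimate, fix $\delta>0$; since $\signal{x}_n\to\signal{x}$ we have $\signal{x}_n\le\signal{x}+\delta\signal{1}$ for all large $n$, so monotonicity gives $f(h_n\signal{x}_n)\le f(h_n(\signal{x}+\delta\signal{1}))$, whence $\limsup_n f(h_n\signal{x}_n)/h_n\le f_\infty(\signal{x}+\delta\signal{1})$ by (i) applied at $\signal{x}+\delta\signal{1}\in\real_+^N$. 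Letting $\delta\downarrow 0$ and using the continuity of $f_\infty$ on $\real_+^N$ from (ii) gives $\limsup_n f(h_n\signal{x}_n)/h_n\le f_\infty(\signal{x})$, and together with the lower estimate this shows the limit exists and equals $f_\infty(\signal{x})$. Finally, (v) follows by passing the concavity inequality of $f$ through the limit in (i): for $\signal{x},\signal{y}\in\real_+^N$ and $t\in[0,1]$, concavity of $f$ at the scaled points $h\signal{x},h\signal{y}$ gives $f(h(t\signal{x}+(1-t)\signal{y}))\ge tf(h\signal{x})+(1-t)f(h\signal{y})$, and dividing by $h$ and letting $h\to\infty$ yields the concavity of $f_\infty$.

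The main obstacle throughout is that the approximating sequences may reach $\signal{x}$ along coordinates that vanish at or cross zero, so one cannot sandwich $\signal{x}_n$ symmetrically between $(1-\delta)\signal{x}$ and $(1+\delta)\signal{x}$: on the coordinates where $\signal{x}$ vanishes the multiplicative upper bound degenerates. The resolution is the asymmetric sandwich used above, namely a multiplicative lower bound $(1-\delta)\signal{x}$, which is legitimate precisely because $\signal{x}_n\ge\signal{0}$, and an additive upper bound $\signal{x}+\delta\signal{1}$, so that scalability (through Lemma~\ref{lemma.basic_properties}) and monotonicity can be applied cleanly. The only other delicate point is the continuity in (ii), where a direct $\epsilon$--$\delta$ argument is awkward and the infimum-of-upper-semicontinuous-functions representation is the clean route.
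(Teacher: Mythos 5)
Your proposal is correct and follows essentially the same route as the paper: the asymmetric sandwich (multiplicative lower bound $(1-\delta)\signal{x}$ via scalability/monotonicity for part (i), additive upper bound $\signal{x}+\delta\signal{1}$ plus continuity of $f_\infty$ for part (iv)) and the representation of $f_\infty$ on $\real_+^N$ as a pointwise infimum of the continuous maps $\signal{x}\mapsto f(h\signal{x})/h$ for part (ii) are exactly the paper's arguments. The only cosmetic deviations are that you prove (v) by passing the concavity inequality through the limit rather than citing preservation of concavity under pointwise infima, and that you explicitly dispose of approximating sequences leaving $\real_+^N$ (where the paper simply takes the infimum-achieving sequence inside the orthant); neither changes the substance.
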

\begin{proof}
	(i) The inequality $(\forall\signal{x}\in\real^N_+)~f_\infty(\signal{x}) \le \lim_{h\to\infty} f(h \signal{x})/{h}\in\real_+$ is immediate from Lemma~\ref{lemma.basic_properties}(iii) and the definition in \refeq{eq.asymp_func}, so it is sufficient to prove that $(\forall\signal{x}\in\real^N_+)~f_\infty(\signal{x}) \ge \lim_{h\to\infty} f(h \signal{x})/{h}$ to obtain the desired result.
	
	Let $\alpha\in~]0,1[$ be arbitrary. From the definition of the asymptotic function in \refeq{eq.asymp_func} and the definition of standard interference functions, we know that, for every $\signal{x}\in\real_{+}^N$, there exist a sequence $(\signal{x}_n)_{n\in\Natural}\subset \real_{+}^N$ and an  increasing sequence $(h_n)_{n\in\Natural}\subset \real_{++}$ such that $f_\infty(\signal{x})=\lim_{n\to\infty}{f(h_n\signal{x}_n)}/{h_n}$, $\lim_{n\to\infty} \signal{x}_n=\signal{x}$, and $\lim_{n\to\infty} h_n=\infty$. Therefore, as an implication of $\lim_{n\to\infty} \signal{x}_n=\signal{x}\in\real_+^N$, we have $\alpha\signal{x}\le\signal{x}_n$ for every $n\ge L$ with $L\in\Natural$ sufficiently large. Lemma~\ref{lemma.basic_properties}(i) and the monotonicity property of standard interference functions yield
	$
	(\forall n\ge L)
	0<\alpha{f(h_n\signal{x})}/{h_n}<{ f(\alpha h_n\signal{x})}/{h_n} \le { f(h_n\signal{x}_n)}/{h_n}$.	Taking the limit as $n\to\infty$ and considering Lemma~\ref{lemma.basic_properties}(iii), we verify that 
	\begin{align}
	\label{eq.limit_afunc}
	0\le \alpha \lim_{n\to\infty} \dfrac{f(h_n\signal{x})}{h_n} = \alpha {\lim_{h\to\infty}}\dfrac{f(h\signal{x})}{h} \le f_\infty(\signal{x}).
	\end{align}
	Since $\alpha$ can be made arbitrarily close to one, \refeq{eq.limit_afunc} proves that $f_\infty(\signal{x}) \ge  \lim_{h\to\infty} {f(h\signal{x})}/{h} \ge 0$, which completes the proof.

	(ii) First recall that asymptotic functions associated with proper functions are lower semicontinuous and positive homogeneous  \cite[Proposition~2.5.1(a)]{aus03}. Now, assume that $f$ is continuous when restricted to $\real_{+}^N$, and let $(h_n)_{n\in\Natural}\subset\real_{++}$ be an arbitrary sequence such that $\lim_{n\to\infty}h_n=\infty$. Define $g_n:\real_+^N\to\real_+:\signal{x}\mapsto {f(h_n \signal{x})}/{h_n}$, and note that the function $g_n$ is continuous for every $n\in\Natural$ because $f$ is continuous when restricted to $\real_+^N$ by assumption.  The property in (i) and Lemma~\ref{lemma.basic_properties}(ii)-(iii) imply that $(\forall \signal{x}\in\real_{+}^N) f_\infty(\signal{x})=\inf_{n\in\Natural} g_n(\signal{x})$. This shows that, restricted to $\real_{+}^N$, $f_\infty$ is the pointwise infimum of continuous functions, so  $f_\infty$ is upper semicontinuous, in addition to being lower semicontinuous as already shown. Therefore, $f_\infty$   restricted to $\real_{+}^N$ is continuous.

	(iii) Let $({h}_n)_{n\in\Natural}\subset\real_{++}$ be an arbitrary monotone sequence such that $\lim_{n\to\infty}h_n=\infty$. If $\signal{x_1}\ge\signal{x}_2\ge\signal{0}$, the monotonicity property of standard interference functions shows that 
	$
	(\forall n\in\Natural)~
	{f(h_n\signal{x}_2)}/{h_n} \le  {f(h_n\signal{x}_1)}/{h_n}.$ Now let $n\to\infty$ and use the property in (i) to obtain the desired result
$
f_\infty(\signal{x}_2) \le  f_\infty(\signal{x}_1).
$
	
	(iv) Let the arbitrary sequences $(\signal{x}_n)_{n\in\Natural}\subset\real_{+}^N$ and $(h_n)_{n\in\Natural}\subset\real_{++}$ satisfy $\lim_{n\to\infty}\signal{x}_n=\signal{x}$ and $\lim_{n\to\infty}h_n=\infty$. Denote by $\signal{1}\in\real^N$ the vector of ones. As a consequence of $\lim_{n\to\infty}\signal{x}_n=\signal{x}$, we know that 
	\begin{align}
	\label{eq.ineqxn}
	(\forall \epsilon>0)(\exists L\in\Natural)(\forall n\ge L)~ \signal{x}_n\le\signal{x}+\epsilon\signal{1}.
	\end{align}
	As a result,  for every $\epsilon>0$, we have
	\begin{multline}
	\label{eq.ineq_2ef}
	f_\infty(\signal{x}) \overset{(a)}{\le} \liminf_{n\in \Natural} \dfrac{1}{{h}_n} f({h}_n\signal{x}_n) \overset{(b)}{\le} \limsup_{n\in \Natural} \dfrac{1}{{h}_n} f({h}_n\signal{x}_n) \\ \overset{(c)}{\le} \limsup_{n\in \Natural} \dfrac{1}{{h}_n} f({h}_n(\signal{x}+\epsilon\signal{1})) \overset{(d)}{=} f_\infty(\signal{x}+\epsilon\signal{1}),
	\end{multline}
	where (a) follows from \refeq{eq.asymp_func}, (b) is a basic property of limits, (c) follows from \refeq{eq.ineqxn} and  monotonicity of $f$, and (d) is a consequence of the property we proved in (i).  By assumption, $f$ restricted to $\real_{+}^N$ is continuous, so $f_\infty$ restricted to $\real_{+}^N$ is also continuous  as shown in (ii). Therefore, by \refeq{eq.ineq_2ef} and continuity of $f_\infty$ when restricted to $\real_+^N$, we have
	\begin{multline*}
	f_\infty(\signal{x}) {\le} \liminf_{n\in \Natural} \dfrac{1}{{h}_n} f({h}_n\signal{x}_n) {\le} \limsup_{n\in \Natural} \dfrac{1}{{h}_n} f({h}_n\signal{x}_n) \\ {\le} \lim_{\epsilon\to 0^+} f_\infty(\signal{x}+\epsilon\signal{1})=f_\infty(\signal{x}),
	\end{multline*}	
	which implies that $\lim_{n\in \Natural} f({h}_n\signal{x}_n)/{{h}_n}=f_\infty(\signal{x})$.
	
	(v) Let $g_n$ be the function defined in the proof of (ii). Note that, if $f$ is concave in $\real_{+}^N$, then $g_n$ is also concave in $\real_+^N$ for every $n\in\Natural$. In (ii) we have  showed that $(\forall\signal{x}\in\real_+^N)~f_\infty(\signal{x})=\inf_{n\in\Natural}g_n(\signal{x})$, so, in $\real_{+}^N$, $f_\infty$ is the pointwise infimum of a family of concave functions, and the proof is complete because concavity is preserved by taking the pointwise infimum.  
\end{proof}

We end this section by defining a mapping that plays a crucial role in the analysis of network utility maximization problems.

\begin{definition}
	\label{def.amap} (Asymptotic mappings)
	Let the function $t^{(i)}:\real^N\to\real_{++}\cup\{\infty\}$ be a standard interference function for each $i\in\{0,\ldots,N\}$. Given a standard interference mapping $T: \real^N_+ \to\real^N_{++}:\signal{x}\mapsto [t^{(1)}(\signal{x}),\cdots,t^{(N)}(\signal{x})]$, the asymptotic mapping associated with $T$ is given by 
$
	T_\infty:\real^N_+\to\real^N_+:\signal{x}\mapsto [{t}^{(1)}_\infty(\signal{x}),\cdots,{t}^{(N)}_\infty(\signal{x})],
$
	where, for each $i\in\{1,\cdots,N\}$, ${t}^{(i)}_\infty$ is the asymptotic function associated with $t^{(i)}$.
\end{definition}

\section{Properties of Utility Maximization Problems}

As shown in \cite{nuzman07,renato2016maxmin,sun2014,renatomaxmin} and the references therein, a large class of (weighted max-min) utility maximization problems in wireless networks are particular instances of the following canonical optimization problem:

\begin{problem}
	\label{problem.canonical}
	(Canonical form of the network utility maximization problem)
	\begin{align}
	\label{eq.canonical}
	\begin{array}{lll}
	\mathrm{maximize}_{\signal{p}, c} & c \\
	\mathrm{subject~to} & \signal{p}\in \mathrm{Fix}(cT):=\left\{\signal{p}\in\real_{+}^N~|~\signal{p}=cT(\signal{p})\right\} \\
	& \|\signal{p}\|_a \le \bar{p} \\
	& \signal{p}\in\real_{+}^N, c\in\real_{++},
	\end{array}
	\end{align}
	where $\bar{p}\in \real_{++}$ is a design parameter hereafter called power budget,  $\|\cdot\|_a$ is an arbitrary monotone norm, and $T:\real_{+}^N\to\real_{++}^N$ is an arbitrary standard interference mapping. 
\end{problem}

The main objective of this section is to study selected properties of the solution to Problem~\ref{problem.canonical} as a function of the power budget $\bar{p}$. In particular,  these properties enable us to characterize the power regions of a noise limited regime and an interference limited regime in wireless networks. The results in the following not only generalize those of \cite{renatomaxmin} to a larger class of mappings $T$, but they also sharpen the bounds of that study. They are obtained from a deep connection between Problem~\ref{problem.canonical} and conditional eigenvalues of nonlinear mappings:

\begin{fact}
	\label{fact.nuzman_main}
	\cite{nuzman07} Consider the assumptions and definitions in Problem~\ref{problem.canonical}, and denote by $(\signal{p}_{\bar{p}}, c_{\bar{p}})\in\real_{++}^N\times\real_{++}$ a solution to this problem for a given power budget $\bar{p}$. Then:
	\begin{itemize}
		\item[(i)] The solution $(\signal{p}_{\bar{p}}, c_{\bar{p}})\in\real_{++}^N\times\real_{++}$ always exists, and it is unique.
		\item[(ii)] Let $(\signal{x}_{\bar{p}}, \lambda_{\bar{p}})\in\real_{++}^N\times\real_{++}$ be the solution to the following conditional eigenvalue problem: 
		\begin{problem}
			\label{problem.sol_prob3}
			Find $(\signal{x}_{\bar{p}}, \lambda_{\bar{p}})\in\real_{++}^N\times\real_{++}$ such that $T(\signal{x}_{\bar{p}}) = \lambda_{\bar{p}} \signal{x}_{\bar{p}}$ and $\|\signal{x}_{\bar{p}}\|=1$, where $\|\cdot\|$ denotes the monotone norm $\|\signal{x}\|:=\|\signal{x}\|_a/\bar{p}$.
		\end{problem}		
		Then $\signal{p}_{\bar{p}} = \signal{x}_{\bar{p}}$ and $c_{\bar{p}} = 1/\lambda_{\bar{p}}$.

	\item[(iii)] The function $\real_{++}\to\real_{++}:\bar{p} \mapsto c_{\bar{p}}$ is monotonically increasing; i.e., $\bar{p}_1>\bar{p}_2>0$ implies $c_{\bar{p}_1}>c_{\bar{p}_2}>0$.
	\item[(iv)] The function $\real_{++}\to\real_{++}^N:\bar{p} \mapsto \signal{p}_{\bar{p}}$ is monotonically increasing in each coordinate; i.e., $\bar{p}_1>\bar{p}_2>0$ implies $\signal{p}_{\bar{p}_1}>\signal{p}_{\bar{p}_2}>\signal{0}$.
		\end{itemize}
\end{fact}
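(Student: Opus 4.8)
The plan is to prove everything by reducing Problem~\ref{problem.canonical} to the conditional eigenvalue Problem~\ref{problem.sol_prob3} and then exploiting the nonlinear Perron--Frobenius machinery in Fact~\ref{fact.nuzman}. The starting observation is that the rescaled functional $\|\signal{x}\|:=\|\signal{x}\|_a/\bar{p}$ is again a monotone norm, and that the constraint $\signal{p}=cT(\signal{p})$ is exactly the eigenvalue relation $T(\signal{p})=(1/c)\signal{p}$. Hence Fact~\ref{fact.nuzman}(i), applied to $T$ (a standard interference mapping) and to the norm $\|\cdot\|$, delivers a unique pair $(\signal{x}_{\bar{p}},\lambda_{\bar{p}})\in\real_{++}^N\times\real_{++}$ with $T(\signal{x}_{\bar{p}})=\lambda_{\bar{p}}\signal{x}_{\bar{p}}$ and $\|\signal{x}_{\bar{p}}\|_a=\bar{p}$. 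Setting $c:=1/\lambda_{\bar{p}}$ makes $(\signal{x}_{\bar{p}},1/\lambda_{\bar{p}})$ feasible for Problem~\ref{problem.canonical} and is the natural candidate solution; the bulk of the work is to show no feasible point does better, which then yields existence and uniqueness in (i) together with the identity $c_{\bar{p}}=1/\lambda_{\bar{p}}$ in (ii).

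The key lemma I would isolate is a scale-monotonicity statement for the conditional eigenvalue: if $\signal{y}\in\real_{++}^N$ satisfies $\|\signal{y}\|=1$ and $T(\signal{y})\le\mu\signal{y}$, then $\|T(\signal{y})\|\le\mu$ and the next power iterate $\signal{y}^+:=T(\signal{y})/\|T(\signal{y})\|$ again obeys $T(\signal{y}^+)\le\mu\signal{y}^+$. The first inequality is immediate from monotonicity of $\|\cdot\|$; for the second I would write $\signal{y}^+\le(\mu/\|T(\signal{y})\|)\signal{y}$, apply monotonicity of $T$, and then use scalability (Definition~\ref{definition.inter_func}.1) with the factor $\mu/\|T(\signal{y})\|\ge1$ to collapse the bound back to $\mu\signal{y}^+$. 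Given a feasible $(\signal{p},c)$, note $\signal{p}=cT(\signal{p})\in\real_{++}^N$ and $\|\signal{p}\|\le1$; rescaling $\signal{p}$ to unit norm and using scalability once shows the hypothesis $T(\signal{y})\le(1/c)\signal{y}$ holds at the normalized vector. Iterating the lemma keeps $\lambda_n:=\|T(\signal{x}_n)\|\le1/c$ along the power iteration of Fact~\ref{fact.nuzman}(ii), and since $\lambda_n\to\lambda_{\bar{p}}$ I conclude $\lambda_{\bar{p}}\le1/c$, i.e.\ $c\le1/\lambda_{\bar{p}}$. Equality forces $\signal{p}$ to be a positive fixed point of $(1/\lambda_{\bar{p}})T$, which (scaling preserves the defining properties) is a standard interference mapping with a unique fixed point, so $\signal{p}=\signal{x}_{\bar{p}}$; this gives uniqueness in (i) and completes (ii). I expect this lemma, and in particular the bookkeeping that shows the bound $\mu$ is preserved rather than degraded along the iteration, to be the main obstacle.

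For (iii) I would reuse the same lemma rather than re-prove it. Writing $\|\cdot\|_{(1)}:=\|\cdot\|_a/\bar{p}_1$ and feeding the eigenvector $\signal{x}_{\bar{p}_2}$, which has $\|\signal{x}_{\bar{p}_2}\|_{(1)}=\bar{p}_2/\bar{p}_1\le1$, into the argument above shows $\lambda_{\bar{p}_2}\ge\lambda_{\bar{p}_1}$ whenever $\bar{p}_1>\bar{p}_2$. Strictness then follows from uniqueness of fixed points: if $\lambda_{\bar{p}_1}=\lambda_{\bar{p}_2}=:\lambda$, both $\signal{x}_{\bar{p}_1}$ and $\signal{x}_{\bar{p}_2}$ are positive fixed points of the standard interference mapping $(1/\lambda)T$, hence equal, contradicting $\|\signal{x}_{\bar{p}_1}\|_a=\bar{p}_1\ne\bar{p}_2=\|\signal{x}_{\bar{p}_2}\|_a$. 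Thus $\lambda_{\bar{p}_1}<\lambda_{\bar{p}_2}$ and $c_{\bar{p}_1}=1/\lambda_{\bar{p}_1}>1/\lambda_{\bar{p}_2}=c_{\bar{p}_2}$, which is (iii).

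Finally, (iv) follows from (iii) by a standard-interference fixed-point comparison. Since $c_{\bar{p}_1}>c_{\bar{p}_2}$ and $T(\signal{x})>\signal{0}$, the map $c_{\bar{p}_1}T$ strictly dominates $c_{\bar{p}_2}T$ coordinate-wise, so $\signal{p}_{\bar{p}_1}=c_{\bar{p}_1}T(\signal{p}_{\bar{p}_1})>c_{\bar{p}_2}T(\signal{p}_{\bar{p}_1})$; that is, $\signal{p}_{\bar{p}_1}$ is a strict supersolution of the standard interference mapping $c_{\bar{p}_2}T$. Iterating $c_{\bar{p}_2}T$ downward from $\signal{p}_{\bar{p}_1}$ produces a decreasing sequence converging to the unique fixed point $\signal{p}_{\bar{p}_2}$, whence $\signal{p}_{\bar{p}_2}<\signal{p}_{\bar{p}_1}$ coordinate-wise, establishing (iv). Throughout, the only external ingredients are the existence, uniqueness, and power-iteration convergence in Fact~\ref{fact.nuzman} and the classical uniqueness and monotonicity of fixed points of standard interference mappings.
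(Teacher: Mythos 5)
The paper offers no proof of this statement---it is imported as a Fact directly from \cite{nuzman07}---so there is no in-paper argument to compare against; what follows is an assessment of your reconstruction on its own terms. Your proof is, as far as I can check, correct. The reduction of the constraints $\signal{p}=cT(\signal{p})$, $\|\signal{p}\|_a\le\bar{p}$ to the conditional eigenvalue problem under the rescaled monotone norm $\|\cdot\|_a/\bar{p}$ is exactly the content of part (ii), and your key lemma---that the supersolution property $T(\signal{y})\le\mu\signal{y}$ is preserved by the normalized power iteration, so that $\lambda_n=\|T(\signal{x}_n)\|\le\mu$ propagates to the limit and yields $\lambda_{\bar{p}}\le 1/c$ for every feasible $(\signal{p},c)$---is a clean way to certify optimality of $c=1/\lambda_{\bar{p}}$; the scalability step $T(\beta\signal{y})\le\beta T(\signal{y})$ for $\beta\ge 1$ is applied correctly both inside the lemma and when normalizing a feasible $\signal{p}$ with $\|\signal{p}\|_a\le\bar{p}$. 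Two ingredients are external to this paper but classical from \cite{yates95} and should be cited explicitly: uniqueness of positive fixed points of a standard interference mapping (used for uniqueness in (i)--(ii) and for strictness in (iii)), and monotone convergence of the fixed-point iteration started from a supersolution down to the fixed point (used for (iv)). Two minor points: a standard interference mapping has \emph{at most} one fixed point, so in the equality case you should say that $(1/\lambda_{\bar{p}})T$ \emph{does} have the fixed point $\signal{x}_{\bar{p}}$ and uniqueness then forces $\signal{p}=\signal{x}_{\bar{p}}$; and for the weak inequality in (iii) the observation that any point feasible for budget $\bar{p}_2$ is feasible for $\bar{p}_1>\bar{p}_2$ gives $c_{\bar{p}_1}\ge c_{\bar{p}_2}$ immediately, so your lemma is only genuinely needed for strictness, where your argument via $\|\signal{x}_{\bar{p}_1}\|_a\ne\|\signal{x}_{\bar{p}_2}\|_a$ works.
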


An important implication of Fact~\ref{fact.nuzman_main}(ii) is that the simple iterative scheme in \refeq{eq.krause_iter} is able to solve Problem~\ref{problem.canonical}. Fact~\ref{fact.nuzman_main} also motivates the use of the following functions \cite{renatomaxmin}:

\begin{definition}
	\label{def.mm_ee}
	(Utility, power, and $\|\cdot\|_b$-energy efficiency functions) Denote by  $(\signal{p}_{\bar{p}},~c_{\bar{p}})\in\real_{++}^N\times\real_{++}$ the solution to Problem~\ref{problem.canonical} for a given power budget $\bar{p}\in\real_{++}$. The utility and power functions are defined by, respectively, $U:\real_{++}\to\real_{++}:\bar{p}\mapsto c_{\bar{p}}$ and $P:\real_{++}\to\real_{++}^N:\bar{p}\mapsto \signal{p}_{\bar{p}}$. In turn, given a monotone norm $\|\cdot\|_b$, the $\|\cdot\|_b$-energy efficiency function is defined by $E:\real_{++}\to\real_{++}:\bar{p}\mapsto U(\bar{p})/\|P(\bar{p})\|_b$, and note that $(\forall \bar{p}\in\real_{++})~ E(\bar{p})=1/\|T(\signal{p}_{\bar{p}})\|_b=c_{\bar{p}}/\|\signal{p}_{\bar{p}}\|_b$.
\end{definition}

 By Fact~\ref{fact.nuzman_main}(iii)-(iv),  the utility function $U$ and each coordinate of the power function $P$ are monotonically increasing. However, in the next lemma, we show that the utility cannot grow faster than the transmit power.
 
\begin{lemma} 
	\label{lemma.e_monotone} The $\|\cdot\|_b$-efficiency function $E:\real_{++}\to\real_{++}$ is non-increasing; i.e., ${\bar{p}_1}>{\bar{p}_2}>0$ implies $E(\bar{p}_1) \le E(\bar{p}_2)$.
\end{lemma}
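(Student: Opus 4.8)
The plan is to avoid comparing the numerator $c_{\bar{p}}$ and the denominator $\|\signal{p}_{\bar{p}}\|_b$ of $E$ separately, because both of these quantities are monotonically increasing in $\bar{p}$ by Fact~\ref{fact.nuzman_main}(iii)-(iv), so the behavior of their ratio is not clear at first sight. Instead, I would exploit the alternative expression $E(\bar{p})=1/\|T(\signal{p}_{\bar{p}})\|_b$ recorded in Definition~\ref{def.mm_ee}, which collapses the entire claim into a single monotonicity argument on the standard interference mapping $T$.

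Concretely, fix $\bar{p}_1>\bar{p}_2>0$. The first step is to invoke Fact~\ref{fact.nuzman_main}(iv) to obtain $\signal{p}_{\bar{p}_1}>\signal{p}_{\bar{p}_2}>\signal{0}$, with the inequality understood coordinate-wise. The second step applies the monotonicity property of standard interference mappings (Definition~\ref{definition.inter_func}.2) to deduce $T(\signal{p}_{\bar{p}_1})\ge T(\signal{p}_{\bar{p}_2})\ge\signal{0}$. Since $\|\cdot\|_b$ is a monotone norm and both arguments are non-negative, the third step yields $\|T(\signal{p}_{\bar{p}_1})\|_b\ge\|T(\signal{p}_{\bar{p}_2})\|_b>0$. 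Taking reciprocals and using the identity $E(\bar{p})=1/\|T(\signal{p}_{\bar{p}})\|_b$ then gives $E(\bar{p}_1)\le E(\bar{p}_2)$, which is exactly the desired conclusion.

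There is no genuinely difficult step here: the argument is just a chain of three monotonicity facts (of the power map, of $T$, and of the norm $\|\cdot\|_b$) composed together. The only point requiring care is the very first choice, namely expressing $E$ through $\|T(\signal{p}_{\bar{p}})\|_b$ rather than through $c_{\bar{p}}/\|\signal{p}_{\bar{p}}\|_b$, since the latter representation would force a two-sided comparison of two simultaneously increasing quantities and would not obviously close. Once the correct representation is selected, monotonicity of the standard interference mapping does all the work, and in particular no properties of asymptotic functions are required.
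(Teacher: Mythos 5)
Your proof is correct and is essentially identical to the paper's own argument: both rely on the representation $E(\bar{p})=1/\|T(\signal{p}_{\bar{p}})\|_b$ from Definition~\ref{def.mm_ee} and then chain the monotonicity of $P$ (Fact~\ref{fact.nuzman_main}(iv)), of the standard interference mapping $T$, and of the monotone norm $\|\cdot\|_b$, together with positivity of $T$. No differences worth noting.
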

\begin{proof}
The result follows from  
\begin{multline*}
 {\bar{p}_1}>{\bar{p}_2}>0 \overset{(a)}{\Rightarrow} P(\bar{p}_1) > P(\bar{p}_2)  \overset{(b)}{\Rightarrow} T(P(\bar{p}_1)) \ge T(P(\bar{p}_2)) \\
 \overset{(c)}{\Rightarrow} \|T(P(\bar{p}_1))\|_b \ge \|T(P(\bar{p}_2))\|_b >0 \overset{(d)}{\Leftrightarrow} E(\bar{p}_1) \le E(\bar{p}_2),
\end{multline*}
where (a) is an implication of Fact~\ref{fact.nuzman_main}(iv), (b) is a consequence of the monotonicity of standard interference functions, (c) results from the monotonicity of the norm $\|\cdot\|_b$ and positivity of $T$, and (d) is obtained from a basic property of $E$ shown in Definition~\ref{def.mm_ee}.
\end{proof}

We can also prove that the functions $U$, $P$, and $E$ in Definition~\ref{def.mm_ee} are continuous in $\real_{++}$, but we omit the proof owing to the space limitation (it is similar to that shown in \cite{renatomaxmin} for utility optimization problems with concave mappings $T$). We are now ready to show that spectral properties of asymptotic mappings are useful to characterize the performance of wireless networks in a unified way. To this end, we need the following technical result:

\begin{proposition}
	\label{prop.acondv}
	Consider the assumptions and definitions of Problem~\ref{problem.canonical}. For notational convenience, let $(\signal{p}_{\bar{p}}, \lambda_{\bar{p}}):=(P(\bar{p}), 1/U(\bar{p}))\in\real_{++}^N\times\real_{+}$ be the solution to Problem~\ref{problem.sol_prob3} for a given power budget $\bar{p}\in\real_{++}$ (see Fact~\ref{fact.nuzman}), and denote by $T_\infty:\real_{+}^N\to\real_{+}^N$ the asymptotic mapping associated with the standard interference mapping $T:\real_{+}^N\to\real_{++}^N$. Then:
	\begin{itemize}
		\item[(i)] The limit $\lim_{\bar{p}\to\infty}\lambda_{\bar{p}}=:\lambda_\infty\ge 0$ exists.
		\item[(ii)] Let the scalar $\lambda_\infty$ be as defined in (i), and assume that $T$ is continuous. In addition, let $(\bar{p}_n)_{n\in\Natural}\subset \real_{++}$ denote an arbitrary monotonically increasing sequence satisfying $\lim_{n\to\infty} \bar{p}_n = \infty$, and define $\signal{x}_{n} := (1/\|\signal{p}_{\bar{p}_n}\|_a)\signal{p}_{\bar{p}_n}$. If $\signal{x}_{\infty}\in\real_+^N$ is an accumulation point of the bounded sequence $(\signal{x}_n)_{n\in\Natural}\subset\real_{++}^N$, then the tuple $(\signal{x}_\infty,\lambda_\infty)$ solves the following conditional eigenvalue problem:
	
		\begin{problem}
			\label{problem.eival_tinf}
			Find $(\signal{x}, \lambda)\in\real_{+}^N\times\real_{+}$ such that $T_\infty(\signal{x}) = \lambda \signal{x}$ and $\|\signal{x}\|_a = 1$.
		\end{problem}	
		
		\item[(iii)] Consider the assumptions and the notation in (i) and (ii). If Problem~\ref{problem.eival_tinf} has a unique solution denoted by $(\signal{x}^\prime, \lambda^\prime)\in\real_{++}^N\times\real_{++}$, then $\lim_{n\to\infty}\signal{x}_n=\signal{x}_\infty=\signal{x}^\prime$ and $\lambda^\prime=\lambda_\infty$.

	\end{itemize}
\end{proposition}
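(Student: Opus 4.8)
The plan is to handle the three parts in order, since (ii) depends on (i) and (iii) depends on (ii). For part (i), I would first recall from the proposition's own notation that $\lambda_{\bar{p}}=1/U(\bar{p})=1/c_{\bar{p}}$, and that by Fact~\ref{fact.nuzman_main}(iii) the map $\bar{p}\mapsto c_{\bar{p}}$ is strictly positive and monotonically increasing. Consequently $\bar{p}\mapsto\lambda_{\bar{p}}$ is monotonically decreasing and bounded below by zero, so the limit $\lambda_\infty:=\lim_{\bar{p}\to\infty}\lambda_{\bar{p}}\ge 0$ exists, which settles (i).

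For part (ii), the starting observation is that the normalization constraint $\|\signal{p}_{\bar{p}}\|_a/\bar{p}=1$ of Problem~\ref{problem.sol_prob3} gives $\|\signal{p}_{\bar{p}_n}\|_a=\bar{p}_n$, so that $\signal{x}_n=\signal{p}_{\bar{p}_n}/\bar{p}_n$ with $\|\signal{x}_n\|_a=1$ for every $n$. The sequence $(\signal{x}_n)$ therefore lies on the $\|\cdot\|_a$-unit sphere; it is bounded, admits accumulation points, and any accumulation point $\signal{x}_\infty$ satisfies $\|\signal{x}_\infty\|_a=1$ by continuity of the norm. To recover the eigenvalue equation, I would rewrite the identity $T(\signal{p}_{\bar{p}_n})=\lambda_{\bar{p}_n}\signal{p}_{\bar{p}_n}$ as $T(\bar{p}_n\signal{x}_n)/\bar{p}_n=\lambda_{\bar{p}_n}\signal{x}_n$ by substituting $\signal{p}_{\bar{p}_n}=\bar{p}_n\signal{x}_n$ and dividing by $\bar{p}_n$. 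Passing to a subsequence $(\signal{x}_{n_k})$ with $\signal{x}_{n_k}\to\signal{x}_\infty$ (and still $\bar{p}_{n_k}\to\infty$), I would apply Proposition~\ref{prop.af_properties}(iv) coordinate-wise---this is exactly where continuity of $T$ enters---to conclude that the left-hand side converges to $T_\infty(\signal{x}_\infty)$. On the right-hand side, part (i) yields $\lambda_{\bar{p}_{n_k}}\to\lambda_\infty$, whence $\lambda_{\bar{p}_{n_k}}\signal{x}_{n_k}\to\lambda_\infty\signal{x}_\infty$. Equating the two limits gives $T_\infty(\signal{x}_\infty)=\lambda_\infty\signal{x}_\infty$, so $(\signal{x}_\infty,\lambda_\infty)$ solves Problem~\ref{problem.eival_tinf}.

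The delicate point, and what I expect to be the main obstacle, is precisely this limit interchange $\lim_k T(\bar{p}_{n_k}\signal{x}_{n_k})/\bar{p}_{n_k}=T_\infty(\signal{x}_\infty)$. A naive appeal to the $\liminf$ in \refeq{eq.asymp_func} would only produce an inequality; the genuine identity requires that, under continuity of $T$ on $\real_+^N$, Proposition~\ref{prop.af_properties}(iv) upgrades that $\liminf$ to a true limit along \emph{every} admissible sequence. Care is needed only to verify that the hypotheses of (iv)---namely $\signal{x}_{n_k}\to\signal{x}_\infty\in\real_+^N$ and $\bar{p}_{n_k}\to\infty$---hold for the chosen subsequence, which they do since $\bar{p}_n\to\infty$ forces $\bar{p}_{n_k}\to\infty$.

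Finally, for part (iii) I would argue that uniqueness forces convergence of the entire sequence. Since part (i) fixes a single value $\lambda_\infty$ independent of the accumulation point, part (ii) shows that \emph{every} accumulation point $\signal{x}_\infty$ of $(\signal{x}_n)$ pairs with $\lambda_\infty$ to solve Problem~\ref{problem.eival_tinf}. By the assumed uniqueness of the solution $(\signal{x}^\prime,\lambda^\prime)\in\real_{++}^N\times\real_{++}$, it follows that $\signal{x}_\infty=\signal{x}^\prime$ and $\lambda_\infty=\lambda^\prime$ for every accumulation point. A bounded sequence in $\real^N$ possessing a unique accumulation point converges to it---otherwise a subsequence bounded away from $\signal{x}^\prime$ would, by Bolzano--Weierstrass, yield a second accumulation point, a contradiction---so $\lim_{n\to\infty}\signal{x}_n=\signal{x}_\infty=\signal{x}^\prime$ and $\lambda^\prime=\lambda_\infty$, which completes the proof.
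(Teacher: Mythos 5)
Your proposal is correct and follows essentially the same route as the paper: monotonicity of $\bar{p}\mapsto\lambda_{\bar{p}}$ for (i), the rewriting $\lambda_{\bar{p}_n}\signal{x}_n = T(\bar{p}_n\signal{x}_n)/\bar{p}_n$ combined with Proposition~\ref{prop.af_properties}(iv) along a convergent subsequence for (ii), and the unique-accumulation-point argument for (iii). The ``delicate point'' you flag is precisely what the paper delegates to Proposition~\ref{prop.af_properties}(iv), so nothing further is needed.
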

\begin{proof}
	 (i) The limit $\lim_{n\to\infty} \lambda_{\bar{p}_n} =: \lambda_\infty\ge 0$ exists because the function $\real_{++}\to\real_{++}:\bar{p} \mapsto \lambda_{\bar{p}}$ is monotonically decreasing (and bounded below by zero) by Fact~\ref{fact.nuzman_main}(iii).

	 (ii) First note that $T_\infty$ is continuous as a consequence of Proposition~\ref{prop.af_properties}(ii). Since $(\forall n\in \Natural)~T(\signal{p}_{\bar{p}_n}) = \lambda_{\bar{p}_n} \signal{p}_{\bar{p}_n}$ and $\|\signal{p}_{\bar{p}_n}\|_a = \bar{p}_n>0$, we have
	 \begin{multline}
	 \label{eq.xn}
	 (\forall n\in \Natural) \lambda_{\bar{p}_n} \signal{x}_n = \dfrac{\lambda_{\bar{p}_n}}{\|\signal{p}_{\bar{p}_n}\|_a} \signal{p}_{\bar{p}_n} 
	 = \dfrac{1}{\|\signal{p}_{\bar{p}_n}\|_a} T\left(\dfrac{\|\signal{p}_{\bar{p}_n}\|_a}{\|\signal{p}_{\bar{p}_n}\|_a} \signal{p}_{\bar{p}_n}\right) \\ = \dfrac{1}{\bar{p}_n} T(\bar{p}_n\signal{x}_n).
	 \end{multline}

 Now, if $\signal{x}_\infty$ is an accumulation point of the bounded sequence $(\signal{x}_n)_{n\in\Natural}\subset\real_{++}^N$, then there exists a convergent subsequence $(\signal{x}_n)_{n\in K}$, $K\subset\Natural$, such that $\lim_{n\in K} \signal{x}_{n} = \signal{x}_\infty\in\real_+^N$. Therefore, from the result in (i) and \refeq{eq.xn}, we have
	 \begin{align*}
	 	 \lambda_\infty \signal{x}_\infty = \lim_{n\in K} \dfrac{1}{\bar{p}_n} T(\bar{p}_n\signal{x}_n) = T_\infty(\signal{x}_\infty),
	 \end{align*}
	 where the last equality follows from Proposition~\ref{prop.af_properties}(iv) and continuity of $T_\infty$. 
We now conclude the proof by noticing that  $\|\signal{x}_\infty\|_a = \lim_{n\in K} \|\signal{x}_n\|_a=1$.

(iii) If $(\signal{x}^\prime, \lambda^\prime)\in\real_{++}^N\times\real_{++}$ is the unique solution to Problem~\ref{problem.eival_tinf}, then, as an immediate consequence of the result in (ii), the only accumulation point of the bounded sequence $(\signal{x}_n)_{n\in\Natural}$ is $\signal{x}^\prime$, which implies that $\lim_{n\to\infty}\signal{x}_n=\signal{x}_\infty=\signal{x}^\prime$. As a result, $T_\infty(\signal{x}_\infty) =\lambda^\prime \signal{x}_\infty\in\real_{++}^N$, which also proves that $\lambda^\prime=\lambda_\infty$ by considering the result in (ii).\end{proof}

The assumption of uniqueness and positivity of the solution to Problem~\ref{problem.eival_tinf} in Proposition~\ref{prop.acondv}(iii) is valid in many utility maximization problems. In particular, Fact~\ref{fact.nuzman} shows sufficient conditions that are easily verifiable, and it also shows a simple fixed point algorithm able to solve Problem~\ref{problem.eival_tinf}. For the large class of utility maximization problems for which this assumption is valid, we can obtain simple performance bounds that are tight and fast to compute:

\begin{proposition}
\label{prop.bounds}
Let the assumptions in Proposition~\ref{prop.acondv}(iii) be valid, and consider the functions in Definition~\ref{def.mm_ee}. Furthermore, denote by $(\signal{x}_\infty, \lambda_\infty)\in\real^N_{++}\times\real_{++}$ the solution to Problem~\ref{problem.eival_tinf}. Then the following holds:
\begin{itemize}
		\item[(i)] $\sup_{\bar{p}>0} U(\bar{p}) = \lim_{\bar{p}\to \infty} U(\bar{p}) = 1/\lambda_\infty$ and $\sup_{\bar{p}>0} E(\bar{p}) = \lim_{\bar{p}\to 0^+} E(\bar{p}) = 1/\|T(\signal{0})\|_b$.
	\item[(ii)] $(\forall \bar{p}\in\real_{++})$ 
	\begin{align*} U(\bar{p}) 
	\le  \begin{cases} \bar{p}/\|T(\signal{0})\|_a,&\text{if } \bar{p}\le\|T(\signal{0})\|_a/\lambda_\infty\\ 1/\lambda_\infty,&\text{otherwise.}\end{cases}
	\end{align*}	
	\item[(iii)] $(\forall \bar{p}\in\real_{++}) E(\bar{p}) \le \min\{1/\|T(\signal{0})\|_b, \alpha/(\lambda_\infty~\bar{p})\}$, where $\alpha\in\real_{++}$ is any scalar satisfying $(\forall\signal{x}\in\real^N)~\|\signal{x}\|_a \le \alpha \|\signal{x}\|_b$ (such a scalar always exists because of the equivalence of norms in finite dimensional spaces).
	
	\item[(iv)] $U(\bar{p})\in\Theta(1)$ and $E(\bar{p})\in\Theta(1/\bar{p})$ as $\bar{p}\to\infty$, where big theta $\Theta$ is defined as in the standard family of Bachmann-Landau notations.
\end{itemize}
\end{proposition}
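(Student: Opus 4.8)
The plan is to derive all four parts from three ingredients: the monotonicity of $U$ and $E$ (Fact~\ref{fact.nuzman_main}(iii) and Lemma~\ref{lemma.e_monotone}), the limit $\lambda_{\bar{p}}\to\lambda_\infty$ from Proposition~\ref{prop.acondv}, and two elementary structural facts. The first is that the power constraint is active at the optimum, so $\|\signal{p}_{\bar{p}}\|_a=\bar{p}$ for every $\bar{p}$; this is immediate from Fact~\ref{fact.nuzman_main}(ii), since $\signal{p}_{\bar{p}}=\signal{x}_{\bar{p}}$ and $\|\signal{x}_{\bar{p}}\|=\|\signal{x}_{\bar{p}}\|_a/\bar{p}=1$. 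The second is monotonicity of $T$ and of the norms $\|\cdot\|_a,\|\cdot\|_b$, together with equivalence of norms in $\real^N$.

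For (i) I would first treat the utility. Since $U$ is monotonically increasing by Fact~\ref{fact.nuzman_main}(iii), its supremum equals $\lim_{\bar{p}\to\infty}U(\bar{p})$; and by Proposition~\ref{prop.acondv}(iii) we have $\lambda_{\bar{p}}\to\lambda_\infty=\lambda^\prime>0$, so $U(\bar{p})=1/\lambda_{\bar{p}}\to 1/\lambda_\infty$, the positivity $\lambda_\infty>0$ being exactly the positivity of the solution assumed in Proposition~\ref{prop.acondv}(iii). For the energy efficiency, $E$ is non-increasing by Lemma~\ref{lemma.e_monotone}, hence $\sup_{\bar{p}>0}E(\bar{p})=\lim_{\bar{p}\to 0^+}E(\bar{p})$. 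Using $E(\bar{p})=1/\|T(\signal{p}_{\bar{p}})\|_b$ from Definition~\ref{def.mm_ee} together with $\|\signal{p}_{\bar{p}}\|_a=\bar{p}\to 0$, which forces $\signal{p}_{\bar{p}}\to\signal{0}$ by equivalence of norms, continuity of $T$ yields $\|T(\signal{p}_{\bar{p}})\|_b\to\|T(\signal{0})\|_b>0$ and hence the limit $1/\|T(\signal{0})\|_b$.

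Parts (ii) and (iii) are then one-line bounds combined with (i). For (ii), $U(\bar{p})\le 1/\lambda_\infty$ is immediate from (i) since $U$ is increasing; the linear bound comes from applying $\|\cdot\|_a$ to the eigen-relation $T(\signal{p}_{\bar{p}})=\lambda_{\bar{p}}\signal{p}_{\bar{p}}$, giving $\lambda_{\bar{p}}\bar{p}=\|T(\signal{p}_{\bar{p}})\|_a\ge\|T(\signal{0})\|_a$ by monotonicity of $T$ and of $\|\cdot\|_a$, so $U(\bar{p})=1/\lambda_{\bar{p}}\le\bar{p}/\|T(\signal{0})\|_a$. The case split merely selects the smaller of the two bounds, the crossover occurring at $\bar{p}=\|T(\signal{0})\|_a/\lambda_\infty$. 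For (iii), $E(\bar{p})\le 1/\|T(\signal{0})\|_b$ is again (i); and combining $U(\bar{p})\le 1/\lambda_\infty$ with $\|\signal{p}_{\bar{p}}\|_b\ge\|\signal{p}_{\bar{p}}\|_a/\alpha=\bar{p}/\alpha$ (the defining property of $\alpha$) gives $E(\bar{p})=U(\bar{p})/\|\signal{p}_{\bar{p}}\|_b\le\alpha/(\lambda_\infty\bar{p})$.

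Finally, (iv) is read off from the sandwiching already in place. The $O$-directions are exactly the upper bounds $U(\bar{p})\le 1/\lambda_\infty$ and $E(\bar{p})\le\alpha/(\lambda_\infty\bar{p})$ of (ii)--(iii). For the matching $\Omega$-directions, monotonicity of $U$ with $\lim_{\bar{p}\to\infty}U(\bar{p})=1/\lambda_\infty>0$ gives a positive lower bound on $U$ for all large $\bar{p}$, settling $U(\bar{p})\in\Theta(1)$; the same lower bound on $U$ together with the reverse norm-equivalence estimate $\|\signal{p}_{\bar{p}}\|_b\le\beta\bar{p}$ for some $\beta>0$ gives $E(\bar{p})=U(\bar{p})/\|\signal{p}_{\bar{p}}\|_b\ge c/\bar{p}$ for large $\bar{p}$, settling $E(\bar{p})\in\Theta(1/\bar{p})$. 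The only genuinely delicate step in the whole argument is the $\bar{p}\to 0^+$ limit in (i): one must justify $\signal{p}_{\bar{p}}\to\signal{0}$ from $\|\signal{p}_{\bar{p}}\|_a=\bar{p}$ and then invoke continuity of $T$, whereas everything else reduces to a routine chain of monotonicity and norm-equivalence estimates resting on Proposition~\ref{prop.acondv}.
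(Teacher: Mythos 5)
Your proposal is correct, and for parts (i)--(iii) and the $\Theta(1)$ claim in (iv) it follows essentially the same route as the paper: monotonicity of $U$ and $E$ plus Proposition~\ref{prop.acondv}(iii) for the two suprema, the eigen-relation $\|T(\signal{p}_{\bar p})\|_a=\lambda_{\bar p}\bar p\ge\|T(\signal{0})\|_a$ for the linear bound in (ii), and the norm-equivalence constant $\alpha$ for (iii). The one place where you genuinely diverge is the lower bound $E(\bar p)\ge c/\bar p$ in (iv). The paper works with the identity $E(\bar p)=1/\|T(P(\bar p))\|_b$, bounds $P(\bar p)\le\beta\bar p\signal{1}$ coordinate-wise via norm equivalence, and then invokes both the monotonicity \emph{and the scalability} property of standard interference functions to get $\|T(\beta\bar p\signal{1})\|_b\le\bar p\|T(\beta\signal{1})\|_b$ for $\bar p>1$. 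You instead use the other identity $E(\bar p)=U(\bar p)/\|P(\bar p)\|_b$ together with $\|P(\bar p)\|_b\le\beta'\|P(\bar p)\|_a=\beta'\bar p$ and the fact that $U$ is increasing with a positive limit, so $U(\bar p)\ge U(\bar p_0)>0$ for $\bar p\ge\bar p_0$. Your version is slightly more elementary --- it never touches the scalability axiom and reduces the $\Omega$-direction to the same two ingredients (norm equivalence and monotonicity of $U$) already used elsewhere --- while the paper's version has the minor virtue of producing an explicit constant $1/\|T(\beta\signal{1})\|_b$ expressed directly in terms of $T$ rather than in terms of the value $U(\bar p_0)$ at a reference budget. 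Both are valid; there is no gap.
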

\begin{proof}
	(i) The function $U$ is monotonically increasing by Fact~\ref{fact.nuzman_main}(iii) and $\lim_{\bar{p}\to\infty} U(\bar{p})= 1/\lambda_\infty$ by Proposition~\ref{prop.acondv}(iii),  so $\sup_{\bar{p}>0} U(\bar{p}) = \lim_{\bar{p}\to \infty} U(\bar{p}) = 1/\lambda_\infty$. 

Now, let $(\bar{p}_n)_{n\in\Natural}\subset \real_{++}$ be an arbitrary sequence such that $\lim_{n\to\infty}\bar{p}_n = 0$. To prove that $\lim_{\bar{p}\to 0^+} E(\bar{p}) = \lim_{\bar{p}\to 0^+} {1}/{\|T(P(\bar{p})\|_b} = {1}/{\|T(\signal{0})\|_b}$, we only have to show that $\lim_{n\to\infty} {1}/{\|T(P(\bar{p}_n))\|_b} = {1}/{\|T(\signal{0})\|_b}$. By Fact~\ref{fact.nuzman_main}(ii), we have $\lim_{n\to\infty} \|P(\bar{p}_n)\|_a = \lim_{n\to\infty} \bar{p}_n = {0}$, and thus $\lim_{n\to\infty} P(\bar{p}_n)=\signal{0}$. Therefore, by continuity and positivity of $T$, we obtain $\lim_{n\to\infty} E(\bar{p}_n) = \lim_{n\to\infty} {1}/{\|T(P(\bar{p}_n))\|_b}= {1}/{\|T(\signal{0})\|_b}<\infty$. 	The equality $\sup_{\bar{p}>0} E(\bar{p}) = \lim_{\bar{p}\to 0^+} E(\bar{p}) = {1}/{\|T(\signal{0})\|_b}$ is now immediate from Lemma~\ref{lemma.e_monotone}, and the proof of (i) is complete.

(ii) By Fact~\ref{fact.nuzman_main}(ii), positivity of $U$ and $T$, and monotonicity of $T$, $\|\cdot\|_a$, and $P$, we have 
\begin{align*}
(\forall\bar{p}>0)~ 0< U(\bar{p}) \|T(\signal{0})\|_a \le  U(\bar{p}) \|T(P(\bar{p}))\|_a   = \bar{p},
\end{align*}
 and thus $(\forall\bar{p}>0)~U(\bar{p})\le \bar{p}/ \|T(\signal{0})\|_a$. Furthermore, by (i), we also have $(\forall\bar{p}>0)~U(\bar{p})\le 1/\lambda_\infty$. Combining these two last inequalities, we obtain the desired result $(\forall \bar{p}\in\real_{++}) U(\bar{p}) \le  \min\{\bar{p}/\|T(\signal{0})\|_a, 1/\lambda_\infty\}$.
 
 (iii)  By (i), Fact~\ref{fact.nuzman_main}(ii), and the definition of the energy efficiency function, we deduce for every $\bar{p}>0$:
 \begin{align*}
  E(\bar{p})=\dfrac{U(\bar{p})}{\|P(\bar{p}))\|_b}\le \dfrac{\alpha  U(\bar{p})}{\|P(\bar{p}))\|_a}  = \dfrac{\alpha U(\bar{p})}{\bar{p}} \le \dfrac{\alpha}{\lambda_\infty \bar{p}}. 
 \end{align*}
The desired result is now obtained by combining the previous bound with the bound $(\forall \bar{p}>0)~E(\bar{p}) \le 1/\|T(\signal{0})\|_b$, which is immediate from (i).

(iv) The relation  $U(\bar{p})\in\Theta(1)$ is immediate from $\lim_{\bar{p}\to \infty} U(\bar{p}) = 1/\lambda_\infty$, as shown in (i). To prove that $E(\bar{p})\in\Theta(1/\bar{p})$, recall that, from the equivalence of norms in finite dimensional spaces, there exists a scalar $\beta\in\real_{++}$ such that $(\forall \bar{p}>0)~\|P(\bar{p})\|_\infty \le \beta \|P(\bar{p})\|_a=\beta\bar{p}$, which  implies $(\forall \bar{p}>0)~P(\bar{p}) \le \beta\bar{p}\signal{1}$. Now we use the bound in (iii), the monotonicity of the norm $\|\cdot\|_a$, and the monotonicity and scalability properties of standard interference functions to verify that  $(\forall \bar{p}>1)~{\alpha}/{(\lambda_\infty \bar{p})} \ge {E}(\bar{p}) = {1}/{\|T(P(\bar{p}))\|_b} \ge {1}/{
	\|T(\beta \bar{p} \signal{1})\|_b} \ge {1}/{(\bar{p} \|T(\beta  \signal{1})\|_b)}$, which implies $E(\bar{p})\in\Theta(1/\bar{p})$ as $\bar{p}\to\infty$.
\end{proof}

 Proposition~\ref{prop.bounds}(ii) motivates the definition of the following operating point, which is an improvement over that in \cite{renatomaxmin} because the utility bounds in Proposition~\ref{prop.bounds} are tight, and we consider a larger class of utility maximization problems:
 
 \begin{definition}
 	If the assumptions of Proposition~\ref{prop.bounds} are valid, we say that the network operates in the \emph{low power regime} if $\bar{p} \le \bar{p}_\mathrm{T}$ or in the \emph{high power regime} if $\bar{p} > \bar{p}_\mathrm{T}$, where the power budget $\bar{p}_\mathrm{T}:=\|T(\signal{0})\|_a/\lambda_\infty$ is the \emph{transition point}.  
 \end{definition}

In practice, the transition point is the power budget in which networks are typically transitioning from a regime where the performance is \emph{limited by noise} to a regime where the performance is \emph{limited by interference.}

\section{Numerical example and final remarks}
	
	We apply the above results to the utility maximization problem described in \cite{renato2016maxmin}. Briefly, the objective is to maximize the minimum  downlink rate in an OFDMA-based network, and the optimization variables are the rates, the transmit power, and the load (i.e., the fraction of resource blocks used for transmission) at the base stations. As shown in \cite{renato2016maxmin}, this problem can be written in the canonical form in Problem~\ref{problem.canonical}. In the simulation, we use the same dense network used to produce \cite[Fig.~2]{renato2016maxmin}, with the only difference that here the noise power spectral density is fixed to -154~dBm/Hz, and we vary the power budget. For brevity, we refer the readers to \cite[Sect.~V.B]{renato2016maxmin} for further details. Fig.~\ref{fig.utility} shows the utility obtained in the simulations together with the utility bound in Proposition~\ref{prop.bounds}(ii) (for completeness, we show the energy efficiency in Fig.~\ref{fig.ee}). All conditional eigenvalue problems have been solved with the fixed point iteration in \refeq{eq.krause_iter}. In Fig.~\ref{fig.utility}, we clearly observe that, for a power budget above the transition point, increasing the budget by orders of magnitude brings only marginal gains in utility, an indication that the performance is limited by interference. In contrast, for a power budget below the transition point, the gain in utility is close to linear, which is an indication that the performance is limited by noise. We also note that the simple bound in  Proposition~\ref{prop.bounds}(ii), which requires the solution of only one conditional eigenvalue problem, provides us with a good estimate of the actual network performance for any value of $\bar{p}$. This observation is not necessarily surprising because the bound for $U$ is asymptotically sharp as $\bar{p}\to 0^+$ and as $\bar{p}\to \infty$. Similar results have been obtained for different network utility maximization problems (e.g., \cite{sun2014}\cite[Sect.~IV]{renatomaxmin}), and they will be reported elsewhere.
	  
	\begin{figure}
		\begin{center}
			\includegraphics[width=\columnwidth]{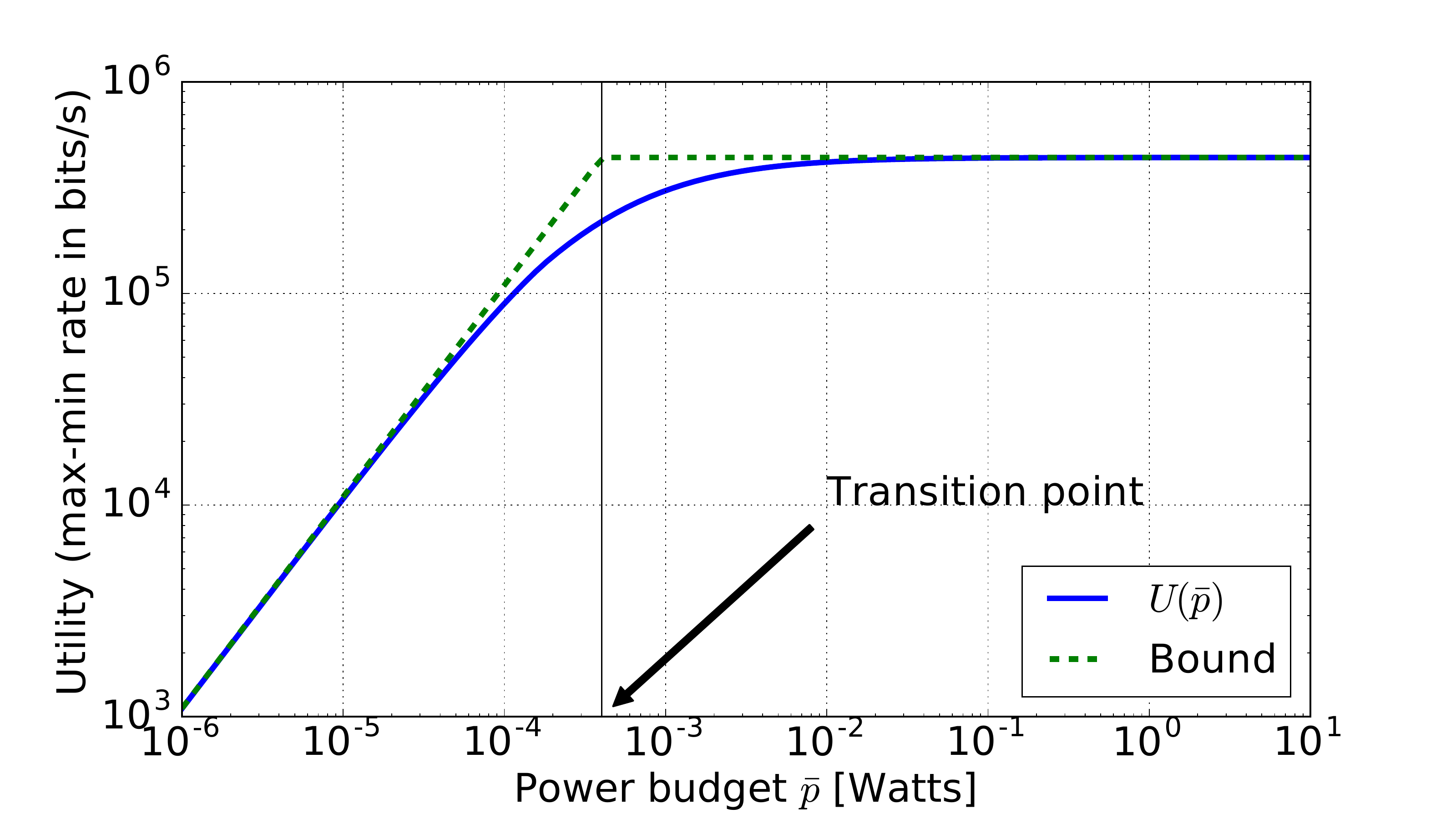}
			\caption{Network utility as a function of the power budget $\bar{p}$ for the problem described in \cite[Sect.~V-B]{renatomaxmin}.}
			\label{fig.utility}
		\end{center}

	\end{figure}
	
	\begin{figure}
		\begin{center}
			\includegraphics[width=\columnwidth]{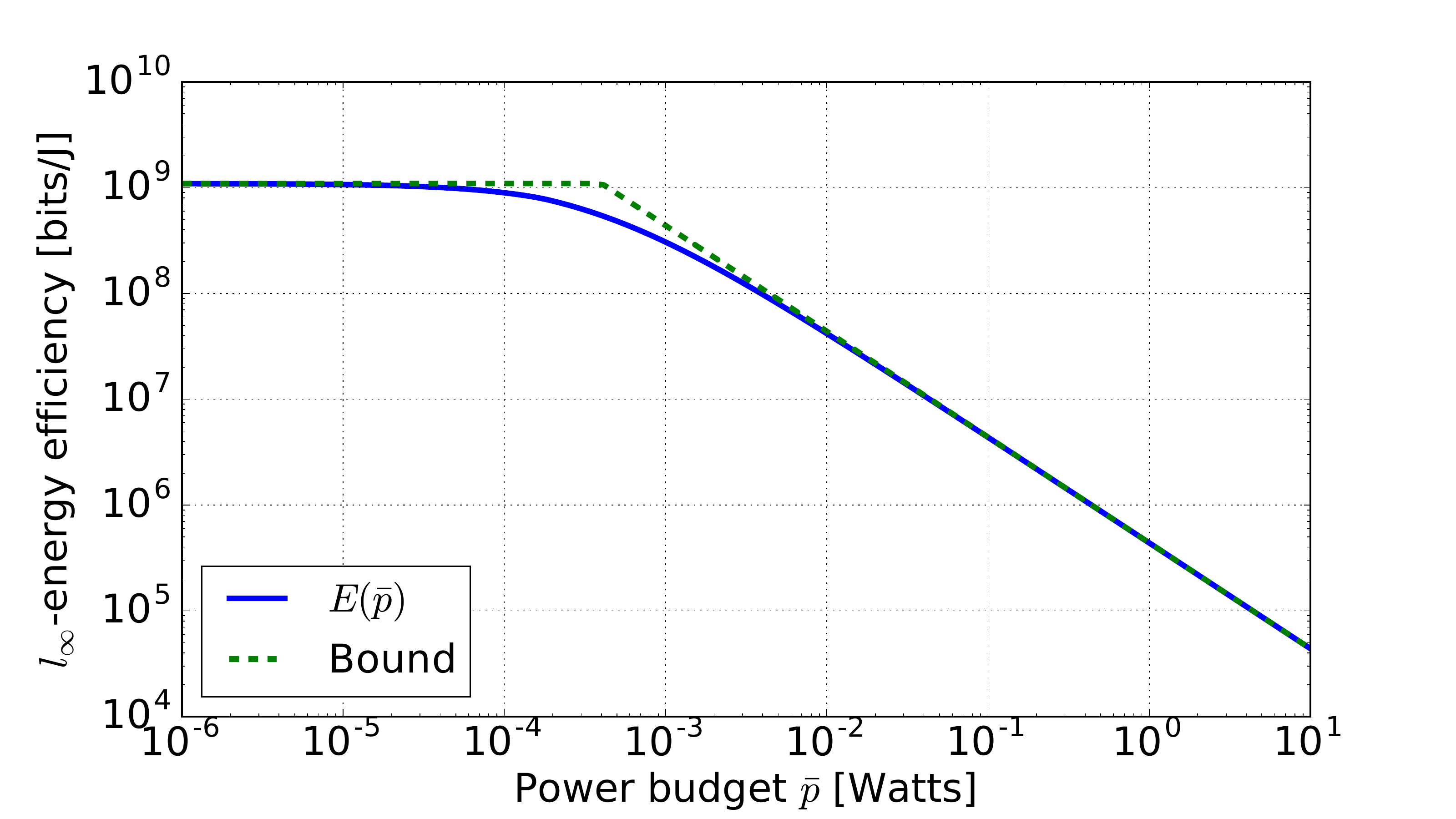}
			\caption{Energy efficiency as a function of the power budget $\bar{p}$ for the problem described in \cite[Sect.~V-B]{renatomaxmin}.}
			\label{fig.ee}
		\end{center}

	\end{figure}
	
\bibliographystyle{IEEEtran}
\bibliography{IEEEabrv,references}

\end{document}